\documentclass[11pt]{article}
\usepackage[utf8]{inputenc}

\usepackage{amssymb}
\usepackage{amsmath}
\DeclareMathOperator*{\argmax}{\arg\!\max}

\usepackage{url}

\usepackage{pgfplots}

\usepackage{graphicx}

\usepackage{nicefrac}

\usepackage{float}

\usepackage{amsthm}
\newtheorem{theorem}{Theorem}
\newtheorem{definition}{Definition}
\newtheorem{corollary}{Corollary}

\newcommand{\trans}[1]{\ensuremath{{#1}^{\scriptscriptstyle \mathsf{T}}}}

\usepackage{tikz}
\usetikzlibrary{arrows}
\tikzset{
 mainNode/.style =
    { circle
    , draw
    }
}

\usepackage{amsmath}
\makeatletter
\renewcommand*\env@matrix[1][*\c@MaxMatrixCols c]{%
\hskip -\arraycolsep
\let\@ifnextchar\new@ifnextchar
\array{#1}}
\makeatother

\usepackage[small]{caption}
\usepackage{subcaption}



\begin{document}

\begin{titlepage}
\begin{center}

{\Huge Overlapping Communities in Complex Networks}
\\[2cm]

\includegraphics[width=0.7\textwidth]{./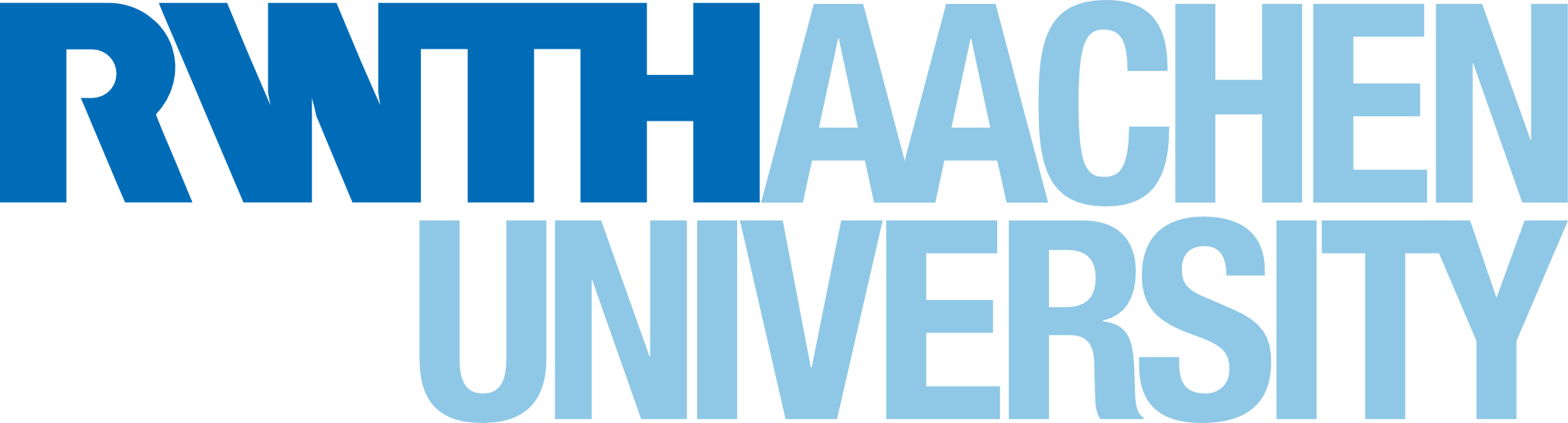}
\\[2cm]

{\huge Jan Dreier}
\\[2cm]

{\Large Bachelor's Thesis}
\\[0.5cm]
{\Large
 Theoretical Computer Science \\
 Department of Computer Science \\
 RWTH Aachen University \\}



\vfill
{\Large 2014}

\end{center}
\end{titlepage}

\newpage
~
\newpage


\null
\vfill
I hereby declare that I have created this work completely on my own and used no other sources or tools than the ones listed, and that I have marked any citations accordingly.

\paragraph{}
Hiermit versichere ich, dass ich die vorliegende Arbeit selbst\"andig verfasst und keine anderen als die angegebenen Quellen und Hilfsmittel benutzt sowie Zitate kenntlich gemacht habe. 

\begin{flushright}
\vspace{12mm}
$\overline{\textit{ Aachen, \today}}$\\
\textit{Jan Dreier}
\end{flushright}
\newpage


\tableofcontents
\newpage


\listoffigures
\newpage


\section{Abstract}
\subsection{English}

Communities are subsets of a network that are densely connected inside and share only few connections to the rest of the network.
The aim of this research is the development and evaluation of an efficient algorithm for detection of overlapping, fuzzy communities.

The algorithm gets as input some members of each community that we aim to discover. We call these members seed nodes.
The algorithm then propagates this information by using random walks that start at non-seed nodes and end as they reach a seed node.
The probability that a random walk starting at a non-seed node $v$ ends at a seed node $s$ is then equated with the probability that $v$ belongs
to the communities of $s$.

The algorithm runs in time
$\tilde{O}(l \cdot m \cdot \log n )$, where $l$ is the number of communities to detect, 
$m$ is the number of edges, $n$ is the number of nodes.
The $\tilde{O}$-notation hides a factor of at most $(\log \log n)^2$.

The LFR benchmark proposed by Lancichinetti et al.\ is used to evaluate the performance of the algorithm.
We found that, given a good set of seed nodes, it is able to reconstruct the communities of a network in a meaningful manner.

\newpage

\subsection{Deutsch}
Eine Community ist eine Untermenge eines Netzwerks, welche intern stark verknüpft ist, jedoch nur wenige Verbingungen zum Rest des Netzwerks besitzt. 
Das Ziel dieser Arbeit ist die Entwicklung und Beurteilung eines effizienten Algorithmus zur Entdeckung von überlappenden, weichen Communitys.

Der Algorithmus erhält als Eingabe Mitglieder jeder Community, die er entdecken soll. Diese Mitglieder nennen wir Seed Knoten.
Der Algorithmus verbreitet diese Information mit Hilfe von Random Walks, welche bei nicht-Seed Knoten anfangen, und enden sobald sie einen Seed Knoten erreichen.
Die Wahrscheinlichkeit, dass ein Random Walk, der bei einem nicht-Seed Knoten $v$ anfängt, bei einem Seed Knoten $s$ endet, 
wird gleichgesetzt mit der Wahrscheinlichkeit, dass $v$ den Communitys von $s$ angehört.

Der Algorithmus läuft in Zeit
$\tilde{O}(l \cdot m \cdot \log n )$, wobei $l$ die Anzahl der Communitys ist, die entdeckt werden sollen;
$m$ und $n$ sind jeweils die Anzahl der Kanten und Knoten.
Die $\tilde{O}$-Notation verbirgt einen Faktor von maximal $(\log \log n)^2$.

Des weiteren haben wir die Performance unseres Algorithmus mit Hilfe des LFR Benchmark von Lancichinetti et al.\ bewertet.
Wir haben herausgefunden, dass unser Algorithmus mit ausreichend Seed Knoten in der Lage ist Communitys sinnvoll zu erkennen.



\newpage
\section{Introduction}

The modern study of networks tries to understand and extract information from complex networks.
It plays a great role in many fields of science, such as biology, sociology and computer science.
One important goal is the detection of community structure.
A network is said to have community structure if its nodes can be separated into sets which are densely connected inside and share only few connections to other sets.
These sets are called communities.
Communities operate mostly independent of the rest of the network and can be analyzed as self-contained entity. 
Also, the interactions between communities describe a meta-network which reveals additional information about the network as a whole.

\textit{Non-overlapping community detection} assigns each node to exactly one community \cite{newman2006} \cite{radicchi2004}.
In contrast to that, \textit{overlapping community detection} allows nodes to belong to multiple communities \cite{gregory2010} \cite{palla2005}.
Community detection may either be \textit{crisp} or \textit{fuzzy}~\cite{gregory2011}.
For crisp detection it is a binary decision whether a node belongs to a community or not.
Fuzzy detection, however, allows nodes to partially belong to (multiple) communities, often indicated by a \textit{belonging factor} ranging between 0 and 1.

In section~\ref{sec:PrevWork} we present current methods for non-overlapping and overlapping detection,
including Newman's modularity, as well as clique percolation and label propagation.
In section~\ref{sec:fundamentals} we discuss the fundamentals needed to understand our model for community detection.
These include absorbing Markov chains (section~\ref{sec:MarkovChains},~\ref{sec:absorbingMarkovChains}), random walks (section~\ref{sec:randomWalks})
and symmetric diagonally dominant linear systems (section~\ref{sec:SDDSystems}).
In section~\ref{sec:communityDetection} we present our model for community detection.
Finally, in section~\ref{sec:experimental} we evaluate it based on the LFR benchmark proposed by Lancichinetti et al.~\cite{lan2009},
which uses random graph models to simulate complex networks.

\subsection{Notation}
In this section we introduce the notation used throughout this thesis.
If $M$ is a matrix then $M_{ij}$ denotes the entry in the $i$th row and $j$th column of $M$. Furthermore, $M_{i*}$ and $M_{*j}$ describe the $i$th row and the $j$th column, respectively.
Given a graph $G=(V,E)$, $V$ is a set of $n$ nodes and $E$ is a set of $m$ edges.
The degree of a vertex $v$ is denoted by $deg(v)$.
The adjacency matrix of $G$ is defined as $A \in \mathbb{R}^{n \times n}$ with
$$
A_{ij} := 
\begin{cases} 
    1        & \mbox{if $v_i$ is adjacent to $v_j$} \\
    0        & \mbox{otherwise}
\end{cases}
$$

\newpage
\section{Previous Work}
\label{sec:PrevWork}

This section serves as a quick overview to some algorithms for overlapping and non-overlapping community detection.
As the field is very diverse, an extensive review is beyond the scope of this thesis. 
For more information, we recommend \cite{xie2011} and \cite{fortunato2010}, two detailed surveys which have been a great source of information for this section.


\subsection{Non-Overlapping Community Detection}

The problem of community detection is not as well defined as it might seem.
It is hard to define a community in a mathematically strict way.
There are many valid definitions and often they differ greatly.
In many cases some sort of arbitrariness and common sense is involved.
One central property most definitions have in common is the assumption that 
there are many edges between members of a community and few edges between members of different communities.

One straightforward definition of a community would be a \textit{clique}, i.e., a fully connected subgraph.
Then the problem of community detection transforms to the well known problem of finding cliques within a graph.
This might be a bit too restrictive though, as it is easy to come up with examples of groups in social networks
which one would consider a community, but in which not everybody knows everybody.

An alternative approach would be to compare the number of connections within the community to the number of connections to the rest of the graph.
A subgraph is said to be a \textit{strong community} if for each vertex the number of edges to vertices within the community exceeds the number of edges to non-community vertices.
In a \textit{weak community} the total number of edges within a community exceeds the total number of outgoing edges.
Radicchi et al.~\cite{radicchi2004} used these concepts in their algorithm for community detection.

Another concept is the detection of communities via a \textit{quality function}. 
A quality function is a function which rates partitions of a graph:
If a partition reveals meaningful community structure it shall be given a higher score.
Community detection then is reduced to finding a partition which maximizes the function.

One of the most important examples of a quality function is the \textit{modularity} proposed by Newman et al.~\cite{newman2006}.
Modularity has been studied extensively and is a widely agreed upon measure for community structure up to today.
It evaluates the goodness of a partition of a network into subgraphs by comparing it to a so called \textit{null model}.

The null model describes a random graph where edges of the original graph are rewired at random, but vertices keep their degree.
It assigns to each pair of vertices a probability that there is an edge between them.
$$
\Pr(\text{edge between vertex $v$ and $w$}) = \frac{k_v \cdot k_w}{2m}
$$
where $k_v$ and $k_w$ are the degree of vertices $v$ and $w$ and $m$ is the total number of edges within the network.

A graph partition has high modularity (i.e., reveals community structure) if the actual number of edges within each subgraph exceeds 
the expected number of edges within each subgraph after edges were rewired according to the null model.
The modularity $Q$ of a graph is defined as
$$
Q = \frac{1}{2m} \sum_{v,w} \left( A_{vw} - \frac{k_v \cdot k_w}{2m} \right) \delta(c_v,c_w)
$$
where $A$ is the adjacency matrix of the network and
$\delta(c_v,c_w)$ is 1 if vertex $v$ and $w$ belong to the same subgraph, otherwise 0.

However, finding a partition which maximizes the modularity is an NP-hard problem~\cite{modularityNPHard}, thus the optimal solution usually is infeasible to find.
Approximate solutions can be obtained by finding eigenvectors in specially crafted matrices~\cite{newmanEigenvec}.
An in depth discussion on modularity and optimization algorithms can be found in~\cite{newmanEigenvec} and \cite{newman2006}.
For more information on non-overlapping community detection we recommend the survey by Fortunato~\cite{fortunato2010}.



\subsection{Overlapping Community Detection}

The traditional approach of non-overlapping detection assigns each vertex to exactly one community;
however, this does not always model the real world. 
In a social network a person may belong to multiple communities (e.g., family, co-workers, sports club).
Overlapping community detection takes this into account by assigning each vertex to one or more communities.
There are many different approaches to overlapping community detection, two of which are presented in this section.

\newpage
\paragraph{Clique Percolation}
The clique percolation method (CPM) proposed by Palla et al.~\cite{palla2005} builds communities based on \textit{$k$-cliques} (cliques of size $k$).
Two $k$-cliques are considered adjacent if they share $k-1$ vertices. 
Communities are identified as unions over all $k$-cliques that can be reached from each other through a series of adjacent $k$-cliques.
They directly correspond to the connected components in a graph of all $k$-cliques, where two $k$-cliques are connected by an edge if they are adjacent.

Since vertices may belong to multiple $k$-cliques, this definition allows for overlapping communities.
Small values for $k$ (between 3 and 6) have been shown to give good results~\cite{palla2005}.
CFinder\footnote{\url{http://cfinder.org}} is an implementation of this algorithm.

\paragraph{Label Propagation}
The underlying idea of label propagation is that vertices adopt the label of its neighboring vertices and form communities based on their label.
Label propagation has been used for both non-overlapping \cite{raghavan2007} \cite{xie2013} and overlapping \cite{gregory2010} community detection.

In the initialization phase of the algorithm each vertex is assigned a unique label.
Then each vertex adopts the label which occurs most frequently within the set of labels of neighboring vertices. Ties are broken at random.
This step is repeated until the vertices have found a consensus on their label.
In the end, all vertices carrying the same label are identified with the same community.
Label updates can happen synchronously (the label of a vertex at update step $i+1$ depends on its neighboring set at step $i$)
or asynchronously (vertices update their label in some fixed order).

The COPRA\footnote{\url{http://www.cs.bris.ac.uk/~steve/networks/software/copra.html}} algorithm by Gregory et al.\cite{gregory2010} extends this
idea to overlapping community detection.
In this algorithm a vertex has a list of labels with corresponding belonging factors between 0 and 1.
In the update step each vertex averages the belonging factors of its neighboring vertices and drops labels whose belonging factor is below some threshold.

\paragraph{}
This section should give a rough idea how diverse community detection algorithms can be.
An extensive overview over a total of 14 current algorithms for overlapping detection, as
well as evaluations and benchmarks can be found in~\cite{xie2011}.

\newpage
\section{Fundamentals}
\label{sec:fundamentals}

In this section we introduce absorbing Markov chains and prove some of their main properties.
Furthermore, we discuss random walks, which are an application of Markov chains and a fundamental building block of our model.
At last, we give a brief overview of near-linear-time solvers for symmetric diagonally dominant linear systems.
The use of such solvers greatly improves the time complexity of our algorithm.

\subsection{Markov Chains}
\label{sec:MarkovChains}

Markov chains are used to model stochastic processes which change over time. 
The initial status of a process is known and the status of the process as time progresses is of interest.

Formally, a Markov chain is a sequence of random Variables $X_1, X_2, X_3, \dots$ which fulfill the Markov property,
namely that the next step only depends on the current step, i.e., 
$\Pr(X_{n+1} = x | X_1 = x_1, X_2 = x_2, \dots, X_n = x_n) = \Pr(X_{n+1} = x | X_n = x_n)$.

The set of possible values for $X_i$ is called the state space of the chain. 
We assume a finite state space $S = \{s_1, \ldots, s_n\}$.
Markov chains are defined by the transition probabilities between these states, which can be represented in graph or matrix form.
An example of both representations of a specific Markov chain can be found in figure~\ref{fig:markovGraph} and~\ref{fig:markovMatrix}.




\paragraph{Probability Distribution}
The state of a Markov chain at a given time is described by a probability distribution $\pi = \trans{(\pi(1), \dots, \pi(n))}$,
where $\pi(i)$ denotes the probability of being at state $s_i$.
Every probability distribution satisfies:
$$
\text{$\pi(i) \ge 0$ for $1 \le i \le n$}
$$
and 
$$
\sum_{i = 1}^n \pi(i) = 1
$$

\paragraph{Stochastic Matrix}
Transitions in a Markov chain lead from one probability distribution to the next and happen in discrete steps. 
Each state has a certain probability to transition to another state.
These probabilities can be expressed by a stochastic matrix $P$, where $P_{ij}$ denotes the probability to reach state $s_j$ from state $s_i$.
A transition to the next probability distribution can now be expressed by multiplying with $P$ from the right: $\pi_{i+1} = \pi_i P$.
$P$~is a square matrix with
$$
\text{$P_{ij} \ge 0$ for $1 \le i,j \le n$}
$$
and 
$$
\text{$\sum_{j = 1}^n P_{ij} = 1$ for $1 \le i \le n$}.
$$


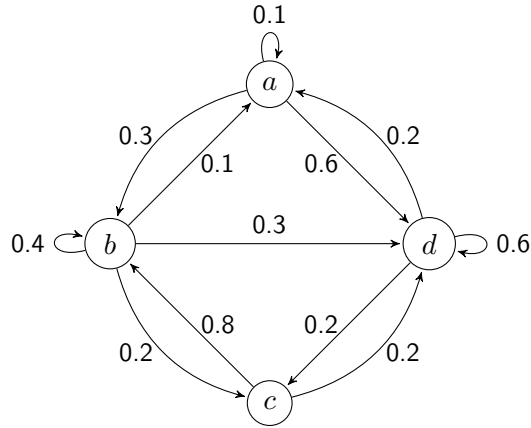
\begin{figure}
\centering
\begin{tikzpicture}
    [ auto
    , ->
    , >=stealth'
    , shorten >=1pt
    , node distance=3cm
    ]

    \node[mainNode] (1) {$a$};
    \node[mainNode] (2) [below left of=1] {$b$};
    \node[mainNode] (3) [below right of=2] {$c$};
    \node[mainNode] (4) [below right of=1] {$d$};

    \path[every node/.style={font=\sffamily\small}]
    (1) edge node [left] {0.6} (4)
        edge [bend right] node[left] {0.3} (2)
        edge [loop above] node {0.1} (1)
    (2) edge node [right] {0.1} (1)
        edge node {0.3} (4)
        edge [loop left] node {0.4} (2)
        edge [bend right] node[left] {0.2} (3)
    (3) edge node [right] {0.8} (2)
        edge [bend right] node[right] {0.2} (4)
    (4) edge node [left] {0.2} (3)
        edge [loop right] node {0.6} (4)
        edge [bend right] node[right] {0.2} (1)
    ;
\end{tikzpicture}
\caption{Graph representation of a Markov chain}
\label{fig:markovGraph}
\end{figure}

\begin{figure}
\centering
\[
P =
    \bordermatrix{
          &   a   & b   & c   & d   \cr 
        a &   0.1 & 0.3 & 0   & 0.6 \cr 
        b &   0.1 & 0.4 & 0.2 & 0.3 \cr 
        c &   0   & 0.8 & 0   & 0.2 \cr 
        d &   0.2 & 0   & 0.2 & 0.6
    }
\]
\caption{Transition matrix of the same Markov chain}
\label{fig:markovMatrix}
\end{figure}

\paragraph{}
A Markov process needs an initial distribution $\pi_0$.
Since $\pi_{i+1} = \pi_i P$, the probability distribution after multiple steps can be expressed with powers of $P$:
$$\pi_k = \pi_{k-1}P = \pi_{k-2} P^2 = \dots = \pi_0 P^k$$
Furthermore, the probability to reach state $s_j$ from state $s_i$ in $k$ steps equals the $ij$th entry in $P^k$.

\subsection{Absorbing Markov Chains}
\label{sec:absorbingMarkovChains}

Absorbing Markov chains are special Markov chains with each state either being absorbing or transient.
The probability of leaving an absorbing state is zero.
Transient states have a nonzero probability to reach at least one absorbing state after a finite number of steps.

\paragraph{}
The stochastic matrix of an absorbing Markov chain with $\sigma$ absorbing states $\{s_1, \ldots, s_\sigma\}$ 
and $\tau$ transient states $\{t_1, \ldots, t_\tau\}$ can be written as:

$$
P=
\begin{pmatrix}[c|c]
Q & R\\ \hline
0 & I
\end{pmatrix}
$$

Where
\begin{itemize}
\item $Q \in \mathbb{R}^{\tau \times \tau}$, $Q_{ij}$ denotes the probability to go from transient state $t_i$ to transient state $t_j$. 
\item $R \in \mathbb{R}^{\tau \times \sigma}$, $R_{ij}$ denotes the probability to go from transient state $t_i$ to absorbing state $s_j$.
\item $0 \in \mathbb{R}^{\sigma \times \tau}$ is the zero matrix.
\item $I \in \mathbb{R}^{\sigma \times \sigma}$ is the identity matrix.
\end{itemize}

Using simple matrix transformations, it can be shown that:
\begin{equation}
\label{grid}
P^k=
\begin{pmatrix}[c|c]
    Q^k & \sum\limits_{i=0}^{k-1} Q^{i} R \\ \hline
    0   & I
  \end{pmatrix}
\end{equation}

An important property of an absorbing Markov chain is that it always converges to a steady probability distribution.
For a general Markov chain this is not the case.
The final result of the community detection algorithm presented in section~\ref{sec:communityDetection}
will be a steady probability distribution of an absorbing Markov chain,
so the following theorem is crucial.

\begin{definition}
$P^\infty := \lim \limits_{k \rightarrow \infty}{P^k}$ is the transition matrix of an absorbing Markov chain for an infinite number of steps.
\end{definition}

\begin{theorem} 
\label{mainabsorb}
$P^\infty$ is well defined and
$$
    P^\infty=
    \begin{pmatrix}[c|c]
        0 & (I-Q)^{-1}R \\ \hline
        0 & I
      \end{pmatrix}
$$
\end{theorem}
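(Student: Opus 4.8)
The plan is to read off both limits directly from the block formula for $P^k$ given in equation~\eqref{grid}. Since the lower blocks there are constantly $0$ and $I$, the whole problem reduces to two claims: that $Q^k \to 0$ as $k \to \infty$, and that $\sum_{i=0}^{k-1} Q^i R \to (I-Q)^{-1} R$. The first claim is the heart of the matter, and everything else follows from it by elementary linear algebra.

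First I would establish that $Q^k \to 0$ entrywise. The probabilistic meaning of $(Q^k)_{ij}$ is the probability of traveling from transient state $t_i$ to transient state $t_j$ in exactly $k$ steps, so the $i$th row sum of $Q^k$ is the probability of remaining among the transient states for all $k$ steps, starting from $t_i$. By the definition of an absorbing chain, every transient state reaches some absorbing state with positive probability in finitely many steps; since the state space is finite, I can choose a single $N$ and an $\epsilon > 0$ such that from every transient state the probability of absorption within $N$ steps is at least $\epsilon$. The Markov property then gives that the probability of staying transient for $mN$ steps is at most $(1-\epsilon)^m$, which tends to $0$. Because the entries of $Q^k$ are nonnegative and dominated by their row sums, this forces $Q^k \to 0$.

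Next I would deduce that $I - Q$ is invertible and that the Neumann series converges. Invertibility is immediate: if $(I-Q)x = 0$ then $Qx = x$, hence $Q^k x = x$ for every $k$, and letting $k \to \infty$ using $Q^k \to 0$ yields $x = 0$, so $I-Q$ is injective and therefore invertible as a square matrix. For the series, writing $S_k = \sum_{i=0}^{k-1} Q^i$ and telescoping gives $(I-Q)S_k = I - Q^k$; multiplying by $(I-Q)^{-1}$ and passing to the limit shows $S_k \to (I-Q)^{-1}$, and hence $S_k R \to (I-Q)^{-1} R$. Substituting both limits into the block form of $P^k$ yields the stated expression for $P^\infty$ and simultaneously shows the limit exists, i.e.\ that $P^\infty$ is well defined.

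The main obstacle is the uniform absorption bound used to prove $Q^k \to 0$: one has to leverage finiteness of the state space to turn the pointwise property "positive probability of eventual absorption" into a single pair $(N,\epsilon)$ valid for all transient starting states. Once that uniform geometric decay is in hand, the convergence of $Q^k$, the invertibility of $I-Q$, and the Neumann series identity are all routine.
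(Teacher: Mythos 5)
Your proposal is correct and follows essentially the same route as the paper: both reduce the claim via the block formula for $P^k$ to showing $Q^k \to 0$, and both obtain this from a uniform absorption bound (your pair $(N,\epsilon)$ is the paper's step count $l$ with row sums $\sum_j [Q^l]_{ij} = 1 - p_i < 1$), after which the Neumann series gives $(I-Q)^{-1}R$. If anything, you are more careful than the paper, which dispatches the series convergence and the invertibility of $I-Q$ with a one-line remark where you supply the telescoping identity $(I-Q)S_k = I - Q^k$ and an explicit injectivity argument.
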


\begin{proof}
According to equation \eqref{grid}, we need to show
\begin{equation}
    \label{qpowerzero}
    \lim \limits_{k \rightarrow \infty}{Q^k} = 0
\end{equation}
and
\begin{equation}
    \label{qsumb}
    \sum\limits_{k=0}^{\infty} Q^{k} R = (I-Q)^{-1} R.
\end{equation}
\eqref{qsumb} is a direct consequence of \eqref{qpowerzero} because the geometric series converges if the geometric sequence converges.
Since $Q$ is non-negative, a sufficient criterion for convergence of the sequence \eqref{qpowerzero} is the existence of an $l$ so that
for all  $1 \le i \le \tau$:
$$
    \sum_{j = 1}^\tau [Q^l]_{ij} < 1
$$
Because $P$ is an absorbing Markov chain, we can find an $l$ so that 
for each transient state $t_i$ there is a non-zero probability $p_i$ to reach an absorbing state after $l$ steps.
$P^l$ is stochastic, so for all $1 \le i \le \tau$:
$$
    \sum_{j = 1}^\tau [Q^l]_{ij} = 1 - p_i < 1
$$

\end{proof}

$P^\infty$ is well defined and describes an infinite number of steps of an absorbing Markov process.
Because of \eqref{qpowerzero}, the transition probabilities to transient states in $P^\infty$ are all zero,
consequently each state will be absorbed with a probability of one.

\paragraph{}
This leads us to the final result of this section:
\begin{corollary}
\label{corolabsorb}
The probability that a transient state $t_i$ is absorbed in an absorbing state $s_j$ equals the $ij$th entry in $(I-Q)^{-1}R$.
\end{corollary}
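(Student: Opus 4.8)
The plan is to connect the probabilistic meaning of the matrix entries, established in the Markov chain section, with the closed form for $P^\infty$ obtained in Theorem~\ref{mainabsorb}. Recall that the $ij$th entry of $P^k$ is the probability of reaching state $s_j$ from state $s_i$ in exactly $k$ steps. Taking $k \to \infty$, the $ij$th entry of $P^\infty$ is therefore the limiting probability of being located in state $j$ after starting in state $i$. What remains is to interpret this limiting probability as an absorption probability and to read off the correct block of $P^\infty$.

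First I would fix a transient start state $t_i$ and an absorbing target state $s_j$, and observe that because an absorbing state can never be left, the event ``the chain is in $s_j$ at step $k$'' coincides with the event ``the chain has been absorbed in $s_j$ at some step $\le k$''. Hence the sequence $[P^k]_{t_i s_j}$ is non-decreasing in $k$, and its limit is exactly the probability that a walk started at $t_i$ is ultimately absorbed in $s_j$. This is the key conceptual step, and I expect it to be the main point requiring care: one must argue that probability mass accumulating in $s_j$ is never lost, so that the pointwise limit of the transition probabilities equals the total absorption probability rather than merely a snapshot at some finite time.

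Finally I would locate this entry within the block form of $P^\infty$ provided by Theorem~\ref{mainabsorb}. Using the same ordering of states as in the partition of $P$ (the transient states $t_1, \dots, t_\tau$ first, the absorbing states $s_1, \dots, s_\sigma$ second), the transition probabilities from a transient state to an absorbing state occupy the top-right block, which Theorem~\ref{mainabsorb} identifies as $(I-Q)^{-1}R$. Consequently the absorption probability from $t_i$ into $s_j$ equals the $ij$th entry of $(I-Q)^{-1}R$, which is precisely the claim. The remaining work is purely bookkeeping of the row and column indices and involves no further estimates.
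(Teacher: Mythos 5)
Your proposal is correct and follows essentially the same route as the paper, which treats the corollary as an immediate consequence of Theorem~\ref{mainabsorb} by reading the absorption probabilities off the top-right block $(I-Q)^{-1}R$ of $P^\infty$. Your explicit argument that $[P^k]_{t_i s_j}$ is non-decreasing because absorbing states are never left---so the limit is the total absorption probability---merely makes precise a step the paper leaves implicit in its remark that every state is absorbed with probability one.
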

More information on absorbing Markov chains can be found in the chapter~11 of
Grinstead and Snell's book \textit{Introduction to Probability}~\cite{introProbability}.

\subsection{Random Walks}
\label{sec:randomWalks}

\textit{Random walks} are an application of Markov chains and our model presented in section~\ref{sec:communityDetection} heavily relies on them.
A random walk $(v_0, v_1, v_2, \ldots)$ in a graph $G = (V,E)$ is a path generated by a stochastic process.
$v_0$ is the start vertex and $v_{i+1}$ is chosen uniformly random among the adjacent vertices of $v_i$.

\paragraph{}
The random walk can be described by a Markov chain:
The state space corresponds to the vertices in $G$ and
the transition matrix $P$ is defined as
$$
P_{ij} =\begin{cases}
    \frac{1}{deg(v_i)}, & \text{if $(v_i,v_j) \in E$} \\
    0, & \text{otherwise}.
  \end{cases}
$$
The probability to be at vertex $j$ after a random walk of $k$ steps starting at vertex $i$ equals the $ij$th entry in $P^k$.

\subsection{Symmetric Diagonally Dominant Linear Systems}
\label{sec:SDDSystems}

A linear system is a set of linear equations
$$A x = b$$
where $A$ is a matrix and $b$, $x$ are vectors of appropriate size.
$A$ and $b$ are known and a value for $x$ which satisfies all equations is of interest.
Countless problems from numerical mathematics, engineering and science can be reduced to solving linear systems.

Fast algorithms to solve such systems are of particular interest.
A naive approach using Gaussian elimination on an $n \times n$ matrix takes running time $O(n^3)$.
Given that in real world applications, matrices easily contain millions of entries, such a running time is impractical.
However, if the matrix contains a certain structure this can be exploited and far better running times can be achieved.


\paragraph{}
An important matrix for graph theory is a graph's Laplacian.
\begin{definition}
Given a graph $G$ with $n$ vertices, its Laplacian is defined as $L \in \mathbb{R}^{n \times n}$ with
$$
L_{ij} := 
\begin{cases} 
    deg(v_i)  & \mbox{if $i = j$} \\ 
    -1        & \mbox{if $i \ne j$ and $v_i$ is adjacent to $v_j$} \\
    0         & \mbox{otherwise}
\end{cases}
$$
\end{definition}
Laplacians play a great role in many fields, including computer graphics or scientific computing.
A graph's Laplacian reveals many interesting properties of a graph.
For example, the multiplicity of the eigenvalue zero gives the number of connected components in a graph.
The algorithm we present in section~\ref{sec:algorithm} will rely on solving linear systems which are almost Laplacian.
Many applications of Laplacians can be found in Fan Chung's book \textit{Spectral Graph Theory}~\cite{chungfan}.


\begin{definition}
An $n \times n$ matrix $A$ is said to be symmetric diagonally dominant (SDD) if it is symmetric and
for all $1 \le i \le n$:
$$
|A_{ii}| \ge \sum_{1 \le j \le n ,\ i \ne j} |A_{ij}|
$$
\end{definition}

A graph's Laplacian is an SDD matrix.
Spielman and Teng recently had a major breakthrough when they showed that 
SDD linear systems can be solved in near-linear time~\cite{ST04,EEST05,ST08}.
Spielman and Teng's algorithm (the ST-solver) combines numerical mathematics and graph theory 
to iteratively produce a sequence of approximate solutions which converge to the exact solution.
The performance of such an iterative system is measured in terms of the time taken to reduce 
an appropriately defined approximation error by a constant factor. The time 
complexity of the ST-solver was reported to be at least $O(m \log^{15} n)$, where $m$ is the number of nonzero entries~\cite{KMP11}.  
Koutis, Miller and Peng~\cite{KMP10,KMP11} developed a simpler and faster algorithm 
for finding $\varepsilon$-approximate solutions to SDD systems in time 
$\tilde{O}(m \log n \log (1/\varepsilon) )$, where the $\tilde{O}$-notation hides 
a factor that is at most $(\log \log n)^2$. A highly readable account 
on SDD systems is the monograph by Vishnoi~\cite{Vis13}. We summarize the 
main result, which we use as a black-box.  

\begin{theorem} 
\label{SDD_systems} 
{\textrm{\cite{KMP11,Vis13}}}
Given a system of linear equations $Ax=b$, where $A$
is an SDD matrix, there exists an algorithm to compute $\tilde{x}$  
such that:
    \[
        \|\tilde{x} - x\|_A \leq \varepsilon \|x\|_A, 
    \]
where $\|y\|_A := \sqrt{\trans{y} A y}$. The algorithm runs in 
time $\tilde{O}(m \cdot \log n \cdot \log (1 / \varepsilon) )$ time, where $m$ is the number of non-zero 
entries in $A$. The $\tilde{O}$-notation hides a factor of at most $(\log \log n)^2$.
\end{theorem}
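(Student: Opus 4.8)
The plan is to treat this as a statement about a preconditioned iterative solver, since that is the route by which all known near-linear-time SDD solvers are obtained. First I would reduce the general SDD system $Ax=b$ to a Laplacian system: a standard transformation (splitting $A$ into its diagonal and its positive and negative off-diagonal parts, then applying Gremban's two-layer double-cover construction) turns any SDD matrix into a weighted graph Laplacian $L$ of size linear in $m$, so it suffices to solve $Lx=b$. The norm $\|\cdot\|_A$ in the statement is exactly the energy norm in which these iterative methods contract, which is why the guarantee $\|\tilde{x}-x\|_A \le \varepsilon\|x\|_A$ is phrased this way rather than in the Euclidean norm; the reduction preserves the relevant quadratic form $\trans{x} A x$, so a bound in the Laplacian energy norm transfers back.

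The core of the argument is to build a good preconditioner $B$ for $L$ and run a preconditioned iteration (Chebyshev iteration or conjugate gradients). The number of iterations needed to reduce the energy-norm error by a constant factor scales like $\sqrt{\kappa(L,B)}$, where $\kappa(L,B)$ is the relative condition number, and each iteration costs one matrix-vector product plus one application of $B^{-1}$. I would construct $B$ from a low-stretch spanning tree of the graph together with a small set of sampled off-tree edges, sampling each off-tree edge with probability proportional to its stretch (equivalently, its effective resistance across the tree); this yields, with high probability, a sparser graph whose Laplacian spectrally approximates $L$. Iterating this idea produces a recursive preconditioner chain of progressively sparser graphs, and solving $B^{-1}$ at each level by recursion is what keeps the total work near-linear.

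The analysis then splits into three pieces I would carry out in order: (i) bound the total stretch of the low-stretch tree by $O(m \log n \log\log n)$, which controls both the sample size and the condition number; (ii) bound the per-level iteration count by the relative condition number and multiply by the cost of one level; and (iii) solve the recursion over the preconditioner chain, showing that the sizes and depths telescope to give $\tilde{O}(m \log n)$ work for a constant-factor error reduction, with the extra $\log(1/\varepsilon)$ factor arising from repeating that constant-factor reduction $O(\log(1/\varepsilon))$ times. The main obstacle, and the part that is genuinely hard and occupies most of the cited work, is the spectral sparsification guarantee: proving that stretch-based sampling preserves $\trans{x} L x$ up to a constant factor with high probability (a matrix concentration argument) while simultaneously controlling the tree stretch, so that the condition number stays polylogarithmic at \emph{every} level of the recursion. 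Since this theorem is the culmination of a long line of work and is used here only as a black box, I would ultimately defer its full proof to \cite{KMP11,Vis13} and verify only that the stated error and running-time bounds, including the hidden $(\log\log n)^2$ factor, are precisely what those references establish.
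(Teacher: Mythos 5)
Your proposal is consistent with the paper, which states this theorem without proof and uses it purely as a black box citing \cite{KMP11,Vis13}; your closing move of deferring the hard sparsification analysis to those references is exactly the paper's own treatment. Your sketch of the cited proof (Gremban reduction to Laplacians, low-stretch spanning trees, stretch-proportional sampling of off-tree edges, recursive preconditioner chains with Chebyshev iteration, and the $\log(1/\varepsilon)$ factor from repeated constant-factor error reduction) accurately reflects the Koutis--Miller--Peng argument, so there is nothing to correct.
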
 


This ability to solve SDD linear systems (exp. Laplacians) fast has been used to
obtain many nearly-linear-time algorithms for applications such as semi-supervised learning, image processing or web-spam detection~\cite{lapparad}.
Or as Erica Klarreich puts it in her article \textit{Network Solutions}\footnote{\url{http://www.simonsfoundation.org/mathematics-and-physical-science/network-solutions}}:
``A new breed of ultrafast computer algorithms offers computer scientists a novel tool to probe the structure of large networks."



\newpage
\section{Our Model for Community Detection}
\label{sec:communityDetection}

Now that we introduced all necessary fundamentals we can present our model for community detection.
Our goal is to find multiple, possibly overlapping communities within a network.

The algorithm we present needs to know some of the members of each community we want to discover.
These members are called \textit{seed nodes} and may belong to multiple communities.
They need to be selected by the user and are handed to the algorithm as part of its input.
The algorithm then uses random walks to extend the partial community-information given by the seed nodes to the rest of the network.

One specific application would be the detection of the political leanings in a social network like Facebook.
The underlying assumption is that people are more likely to interact with people who have the same political beliefs. 
If one knows for some members of the network if they are either conservative or liberal but has no information about the rest of the network
one may use the algorithm to decide for each member of the network if he or she is rather conservative or liberal.

The need for extra information via seed nodes sets this method apart from most other methods.
This can be a disadvantage, as for many networks this information may not be present.
But if a proper set of seed nodes can be specified our algorithm is quite flexible.
By choosing a certain set of seed nodes the user can guide the algorithm to extracting very specific communities.

We say a node has \textit{high affinity} to a community if it belongs to it and \textit{low affinity} if it does not.
Intermediate affinity values are possible and correspond to a partial belonging.
The affinity of all seed nodes needs to be known to the algorithm beforehand.
For all other nodes, the \textit{non-seed nodes}, we want deduce the affinity to each community.
We will use information given by the seed node's affinity and the network structure.

The fundamental idea is that non-seed nodes should adopt the affinities of seed nodes within their close proximity.
But how do we define proximity? 
A naive approach would be to pick the seed node with the shortest path distance and adopt its affinities,
but this does not model community structure very well:
Consider a social network where people correspond to nodes and edges correspond to social interaction.
Two people who do not know each other may have one friend in common or they may have several friends in common.
In both cases the shortest path distance would be 2, but several common friends would be a much stronger indicator that these two people belong to the same community.

This is why we use a different approach: We define a proximity measure based on random walks.
The random walk starts at a non-seed node, traverses through the graph, and ends as soon as it reaches a seed node.
The non-seed node then adopts the affinities of the seed nodes the random walk is likely to reach first.

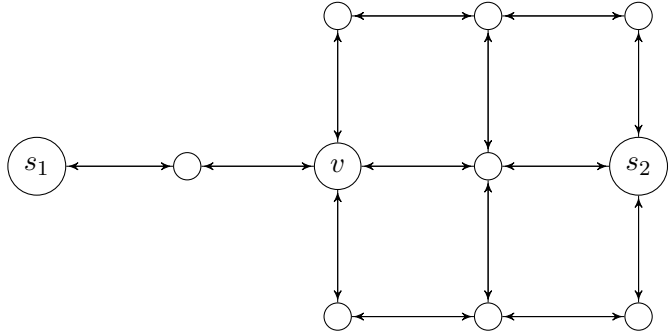
\begin{figure}
\centering
\begin{tikzpicture}
    [ auto
    , ->
    , >=stealth'
    , shorten >=1pt
    , node distance=2cm
    ]
    \node[mainNode] (1) {$s_1$};
    \node[mainNode] (2) [right of=1] {};
    \node[mainNode] (3) [right of=2] {$v$};
    \node[mainNode] (4) [right of=3] {};
    \node[mainNode] (5) [right of=4] {$s_2$};
    \node[mainNode] (3a) [above of =3] {};
    \node[mainNode] (4a) [above of =4] {};
    \node[mainNode] (5a) [above of =5] {};
    \node[mainNode] (3b) [below of =3] {};
    \node[mainNode] (4b) [below of =4] {};
    \node[mainNode] (5b) [below of =5] {};

    \path[every node/.style={font=\sffamily\small}]
    (1) edge node {} (2)
    (2) edge node {} (3)
    (3) edge node {} (4)
    (4) edge node {} (5)

    (5) edge node {} (4)
    (4) edge node {} (3)
    (3) edge node {} (2)
    (2) edge node {} (1)

    (3) edge node {} (3a)
    (4) edge node {} (4a)
    (5) edge node {} (5a)
    (3) edge node {} (3b)
    (4) edge node {} (4b)
    (5) edge node {} (5b)

    (3a) edge node {} (3)
    (4a) edge node {} (4)
    (5a) edge node {} (5)
    (3b) edge node {} (3)
    (4b) edge node {} (4)
    (5b) edge node {} (5)

    (3a) edge node {} (4a)
    (4a) edge node {} (5a)
    (5a) edge node {} (4a)
    (4a) edge node {} (3a)
    (3b) edge node {} (4b)
    (4b) edge node {} (5b)
    (5b) edge node {} (4b)
    (4b) edge node {} (3b)
    ;
\end{tikzpicture}
\caption[Comparison of random walk distance and shortest path distance in a graph.]
    {Example:
    Let $s_1 \in C_1$ and $s_2 \in C_2$ be seed nodes which belong to two implicit communities $C_1$ and $C_2$. All other nodes are non-seed nodes.
    $(s_1, v)$ and $(s_2, v)$ both have a shortest path distance of 2.
    $v$ apparently is more likely to be in the same community as $s_2$ than $s_1$.
    It can be shown that a random walk starting at $v$ reaches $s_1$ with a probability of $\nicefrac{1}{3}$ and $s_2$ with a probability of $\nicefrac{2}{3}$.
    As a result, $v$ has an affinity of $\nicefrac{1}{3}$ to $C_1$ and $\nicefrac{2}{3}$ to $C_2$.
}
\label{fig:randWalkExample}
\end{figure}

Let us assume that a network, indeed, has community structure and contains some hidden communities $C_1, \dots, C_l$ which we want to discover.
For all seed nodes we know which of the communities they belong to. 
A defining property of a community is that it has many inner edges and few leaving edges,
so a random walk starting a node within a community $C_i$ should have a relatively high probability of staying within this community.
This means that the probability that a random walk starting at a non-seed node reaches a seed node in $C_i$ should be higher 
if the non-seed node lies within $C_i$ than if it lies outside.
After all, a non-seed node should adopt the affinities of seed nodes within the same community and thus be assigned correctly.
As a result, the communities we detect should be conform with the community structure of the graph and the seed nodes.
Figure~\ref{fig:randWalkExample} shows the advantage of random walks over the shortest path distance in a small example.

\subsection{Single Community Detection}
\label{sec:singleCommunityDetection}

In this subsection we formally define our model for finding a single community.
Later in section~\ref{sec:MultipleCommunityDetection}, we extend this model to multiple communities.

\paragraph{}
The input to our problem is an undirected, connected graph $G=(V,E)$ and a set of seed nodes $\emptyset \neq S \subset V$.
We want to detect the community $C$.
For all seed nodes $s \in S$ the affinity to $C$ is part of the input.
It is called $\beta(s)$ and may range between 0 and 1. 
$\beta(s) = 0$ means, that $s$ does not belong to $C$ and $\beta(s) = 1$ means that $s$ belongs to $C$.
Intermediate values are possible and correspond to a partial belonging.
The algorithm returns the community by assigning an affinity $\beta(v)$ to all $v \in V \setminus S$.

\paragraph{}
Since random walks should end as soon as they reach a seed node, we transform $G$ into a new graph $G'$ as follows:
First, we make the graph directed by replacing each undirected edge with two directed edges.
Then for each seed node, we remove its outgoing edges and add a self-loop.
The procedure is illustrated in figure~\ref{fig:modgraph}.
From now on, we only work with $G'$.


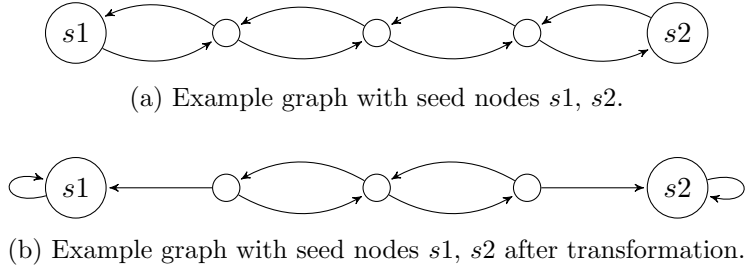
\begin{figure}
\centering
\begin{subfigure}{\textwidth}
    \centering
    \begin{tikzpicture}
        [ auto
        , ->
        , >=stealth'
        , shorten >=1pt
        , node distance=2cm
        ]
        \node[mainNode] (1) {$s1$};
        \node[mainNode] (2) [right of=1] {};
        \node[mainNode] (3) [right of=2] {};
        \node[mainNode] (4) [right of=3] {};
        \node[mainNode] (5) [right of=4] {$s2$};

        \path[every node/.style={font=\sffamily\small}]
        (1) edge [bend right] node {} (2)
        (2) edge [bend right] node {} (3)
        (3) edge [bend right] node {} (4)
        (4) edge [bend right] node {} (5)

        (5) edge [bend right] node {} (4)
        (4) edge [bend right] node {} (3)
        (3) edge [bend right] node {} (2)
        (2) edge [bend right] node {} (1)
        ;
    \end{tikzpicture}
    \caption{Example graph with seed nodes $s1$, $s2$.}
\end{subfigure}
\\[0.5cm]
\begin{subfigure}{\textwidth}
    \centering
    \begin{tikzpicture}
        [ 
          auto
        , ->
        , >=stealth'
        , shorten >=1pt
        , node distance=2cm
        ]
        \node[mainNode] (1) {$s1$};
        \node[mainNode] (2) [right of=1] {};
        \node[mainNode] (3) [right of=2] {};
        \node[mainNode] (4) [right of=3] {};
        \node[mainNode] (5) [right of=4] {$s2$};

        \path[every node/.style={font=\sffamily\small}]
        (1) edge [loop left]  node {} (1)
        (5) edge [loop right] node {} (5)

        (2) edge [bend right] node {} (3)
        (3) edge [bend right] node {} (4)
        (4) edge node {} (5)

        (4) edge [bend right] node {} (3)
        (3) edge [bend right] node {} (2)
        (2) edge node {} (1)
        ;
    \end{tikzpicture}
    \caption{Example graph with seed nodes $s1$, $s2$ after transformation.}
\end{subfigure}
\caption[Removal of outgoing edges of seed nodes in a graph]
{Remove outgoing edges and add self-loop for all seed nodes in an example graph. A random walk reaching $s1$ or $s2$ will stay there forever.}
\label{fig:modgraph}.
\end{figure}

\begin{theorem}
The random walks in $G'$ define an absorbing Markov chain.
\end{theorem}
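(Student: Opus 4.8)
The plan is to verify the definition of an absorbing Markov chain from section~\ref{sec:absorbingMarkovChains} directly, by classifying every state as either absorbing or transient. The state space is the vertex set of $G'$, and the relevant chain is the random walk with transition matrix $P$ from section~\ref{sec:randomWalks} applied to the directed graph $G'$. So I would show two things: (i) every seed node is an absorbing state, and (ii) every non-seed node is a transient state. Since these two categories are exhaustive and cover all vertices, this establishes the claim.

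For (i), recall that the transformation strips each seed node $s$ of all outgoing edges and adds a single self-loop. Hence the only directed edge leaving $s$ is $s \to s$, giving $P_{ss} = 1$ and $P_{sj} = 0$ for all $j \neq s$. The probability of leaving $s$ is therefore zero, so $s$ is absorbing; and since $S \neq \emptyset$, at least one absorbing state exists.

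For (ii), I would exploit that $G$ is connected and $S$ is nonempty. Fix a non-seed node $v$ and choose a shortest path in $G$ from $v$ to the set $S$, say $v = u_0, u_1, \dots, u_k = s$ with $s \in S$. By minimality of the path, $u_0, \dots, u_{k-1}$ are all non-seed nodes and $s$ is the first seed node encountered. The transformation removes only the outgoing edges of seed nodes and leaves every edge incident to a non-seed node as an outgoing edge intact; since each $u_i$ with $i < k$ is a non-seed node, every directed edge $u_i \to u_{i+1}$ (including $u_{k-1} \to s$) survives in $G'$. Each such step has positive probability $1/deg(u_i) > 0$, so the walk follows the entire path with positive probability $\prod_{i=0}^{k-1} 1/deg(u_i) > 0$ and reaches the absorbing state $s$ in $k$ steps. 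Hence $v$ is transient.

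The main obstacle is precisely the survival argument in (ii): deleting the outgoing edges of seed nodes could \emph{a priori} sever a non-seed node from all of $S$, and one must rule this out. The key observation that resolves it is to route through the \emph{first} seed node on a shortest path to $S$; such a path uses seed nodes only as its final endpoint, so it traverses only outgoing edges of non-seed nodes, none of which are ever deleted. This guarantees that reachability from every non-seed node to $S$ is retained in $G'$.
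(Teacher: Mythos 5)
Your proposal is correct and follows essentially the same route as the paper's proof: seed nodes are absorbing because the transformation leaves them only a self-loop, and each non-seed node is transient because a shortest path in $G$ to the nearest seed node passes only through non-seed nodes, whose outgoing edges survive in $G'$, so the walk reaches $S$ with positive probability. You merely spell out more carefully what the paper leaves implicit, namely why minimality of the path guarantees its edges are not deleted and why the traversal probability is strictly positive.
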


\begin{proof}
Follows directly from the definition of absorbing Markov chains in section~\ref{sec:absorbingMarkovChains}.
There are no outgoing edges for seed nodes, hence, they represent absorbing states.
For a non-seed node $v$ let $s \in S$ be the seed node nearest to it.
Since $G$ is connected, there is one such node in $S$.
In the undirected graph $G'$, there still is finite a path from $v$ to $s$.
After all, random walks in $G'$ define an absorbing Markov chain with absorbing states $S$ and transient states $V \setminus S$.
\end{proof}

\paragraph{}
We want a non-seed node $v$ to adopt the affinity of a seed node $s$ if the probability that a random walk starting at $v$ is absorbed at $s$ is high.
Let $E_{v \to s}$ be the event that an infinite random walk in $G'$ starting at $v$ is absorbed in $s$.
According to Theorem~\ref{mainabsorb}, every infinite random walk will be absorbed and
$\Pr(E_{v \to s})$ corresponds to an entry in $P^\infty$, where $P$ is the transition matrix for random walks in $G'$, 
so $\Pr(E_{v \to s})$ is well defined and ${\sum_{s \in S} \Pr(E_{v \to s}) = 1}$.

\begin{definition}
We define for all non-seed nodes $v \in V \setminus S$:
$$ \beta(v) := \sum_{s \in S} \beta(s) \Pr(E_{v \to s}) $$
\end{definition}

\paragraph{}
Thus $\beta(v)$ is a convex combination of $\{ \beta(s) | s \in S \}$ with weights assigned according to 
the probability that a random walk starting at $v$ is absorbed at a certain seed node.

\subsection{Algorithm}
\label{sec:algorithm}

In this section we present an efficient algorithm which calculates $\beta(v)$ for all $v \in V \setminus S$.

\begin{theorem} 
Given a graph $G = (V,E)$ and seed nodes $S$, each with an affinity $\beta$,
there exists an algorithm which computes $\beta(v)$ for all $v \in V \setminus S$
with running time 
$\tilde{O}(m \cdot \log n )$, where $m := |E|$ and $n := |V|$.
The $\tilde{O}$-notation hides a factor of at most $(\log \log n)^2$.
\end{theorem}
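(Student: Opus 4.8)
\section*{Proof proposal}

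The plan is to recognise that all the affinities $\beta(v)$ are the coordinates of the solution of a \emph{single} linear system, and then to reshape that system into symmetric diagonally dominant (SDD) form so that Theorem~\ref{SDD_systems} can be applied as a black box. First I would gather the unknowns into one vector. Write $\beta_T$ for the vector of affinities on the $\tau$ transient (non-seed) nodes and $\beta_S$ for the known vector of seed affinities. By Corollary~\ref{corolabsorb} the $(i,j)$ entry of $(I-Q)^{-1}R$ is exactly $\Pr(E_{t_i\to s_j})$, so the defining identity $\beta(v)=\sum_{s\in S}\beta(s)\Pr(E_{v\to s})$ becomes, in matrix form, $\beta_T=(I-Q)^{-1}R\,\beta_S$, that is $(I-Q)\beta_T = R\beta_S$. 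Hence computing every affinity reduces to solving one linear system with coefficient matrix $I-Q$.

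The obstacle is that $I-Q$ is \emph{not} symmetric, while Theorem~\ref{SDD_systems} demands an SDD matrix. For a random walk we have $Q=D_T^{-1}A_{TT}$ and $R=D_T^{-1}A_{TS}$, where $D_T$ is the diagonal matrix of the full degrees of the transient nodes, $A_{TT}$ is the (symmetric) adjacency matrix among transient nodes, and $A_{TS}$ the transient-to-seed adjacency matrix; the row scaling by $1/\deg$ is precisely what breaks symmetry. The crux of the argument is therefore to symmetrise the system without disturbing its solution, and the trick is to multiply on the left by $D_T$. Since $D_T(I-Q)=D_T-D_TQ=D_T-A_{TT}$ and $D_TR=A_{TS}$, the system turns into
\[
    (D_T-A_{TT})\,\beta_T \;=\; A_{TS}\,\beta_S .
\]
Setting $M:=D_T-A_{TT}$, I would then verify that $M$ is SDD: it is symmetric (a diagonal matrix minus a symmetric $0$/$1$ matrix), its $i$th diagonal entry is $\deg(t_i)$, and the sum of absolute off-diagonal entries in row $i$ equals the number of transient neighbours of $t_i$, which is at most $\deg(t_i)$, with strict dominance exactly at the transient nodes adjacent to some seed. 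This is the ``almost-Laplacian'' boundary matrix the thesis anticipated; because $G$ is connected and $S\neq\emptyset$, every transient component touches a seed, so $M$ is in fact positive definite and invertible, which both guarantees a unique solution and makes the $M$-norm of Theorem~\ref{SDD_systems} well defined.

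It remains to assemble the pieces and count. Reading off $M$ from $G$ and forming the right-hand side $A_{TS}\beta_S$ each cost $O(m)$, since $M$ and $A_{TS}$ together have $O(m)$ nonzero entries. Applying Theorem~\ref{SDD_systems} to $M\beta_T=A_{TS}\beta_S$ produces $\tilde\beta_T$ with $\|\tilde\beta_T-\beta_T\|_M\le\varepsilon\|\beta_T\|_M$ in time $\tilde{O}(m\cdot\log n\cdot\log(1/\varepsilon))$. Treating the approximation quality $\varepsilon$ as a fixed constant makes $\log(1/\varepsilon)=O(1)$, and the black-box solve dominates the $O(m)$ preprocessing, yielding the claimed $\tilde{O}(m\cdot\log n)$ bound with the $(\log\log n)^2$ factor hidden. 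The main thing to get right is the symmetrisation: showing that left-multiplication by $D_T$ preserves the solution $\beta_T$, and that the resulting $M$ is genuinely SDD and invertible. Converting the $M$-norm error guarantee into coordinatewise accuracy on the $\beta(v)$ (via the spectrum of $M$) is the remaining, more routine point.
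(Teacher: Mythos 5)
Your proposal follows the paper's proof essentially step for step: both reduce the affinities to the single linear system $(I-Q)\beta_T = R\beta_S$ via Corollary~\ref{corolabsorb}, symmetrise it by left-multiplying with the diagonal matrix $D$ of \emph{full} degrees to obtain the SDD system $(D-A)\beta_T = A_{TS}\beta_S$, and then invoke Theorem~\ref{SDD_systems} as a black box with $O(m)$ setup cost dominated by the solver. Your explicit remarks on invertibility and on treating $\varepsilon$ as a constant (so $\log(1/\varepsilon)=O(1)$) are minor refinements the paper leaves implicit, not a different route.
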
 

\begin{proof}
The input to the algorithm is a graph $G = (V,E)$, the seed nodes $S$ along with their affinity $\beta$.
Let $n := |V|$, $\sigma := |S|$.
The seed nodes are labeled $S = \{s_1, \dots s_\sigma\} $
and the non-seed nodes are labeled $V \setminus S = \{v_1, \dots, v_{n-\sigma}\}$
The stochastic matrix $P \in \mathbb{R}^{n \times n}$ can be written as:
\[
P=
\begin{pmatrix}[c|c]
Q & R\\ \hline
0 & I
\end{pmatrix}
\]
with
$Q \in \mathbb{R}^{n - \sigma \times n - \sigma}$,
$R \in \mathbb{R}^{n - \sigma \times \sigma}$.

\paragraph{}
We want to calculate $\beta(v_i)$ for a non-seed node $v_i$.
According to Corollary~\ref{corolabsorb}, $\Pr(E_{v_i \to s_j}) = B_{ij}$ with $B := (I-Q)^{-1}R$. This yields:
$$ \beta(v_i) = \sum_{j = 1}^{\sigma} \beta(s_j) B_{ij} $$
The values $\beta(v_1), \dots, \beta(v_{n-\sigma})$ can be expressed as a vector.
The matrix multiplication is linear, so it can be pulled out of the sum.
\[
\begin{split}
\begin{pmatrix}
\beta(v_1) \\
\vdots     \\
\beta(v_{n-\sigma})
\end{pmatrix}
& = \sum_{j = 1}^{\sigma} \beta(s_j) B_{*j}            \\
& = \sum_{j = 1}^{\sigma} \beta(s_j) (I-Q)^{-1}R_{*j}  \\
& = (I-Q)^{-1} \sum_{j = 1}^{\sigma} \beta(s_j)R_{*j}
\end{split}
\]
After all, $\beta(v_1), \dots, \beta(v_{n-\sigma})$ can be obtained by solving the linear system
\begin{equation}
\label{linsys}
(I-Q)x = b
\end{equation}
with 
$$
x := 
\begin{pmatrix}
    \beta(v_1) \\
    \vdots     \\
    \beta(v_{n-\sigma})
\end{pmatrix}
$$
and
$$
b := \sum_{j = 1}^{\sigma} \beta(s_j)R_{*j}
$$

Now, $(I-Q)$ has full rank, is diagonally dominant, but not symmetric.
As a next step we will modify both sides of the equation to get an almost Laplacian SDD system with the same solution, which can be solved efficiently.

\paragraph{}
$P$ describes a random walk and is defined as
$$
P_{ij} =\begin{cases}
    \frac{1}{deg(v_i)}, & \text{if $(v_i,v_j) \in E$} \\
    0, & \text{otherwise}.
  \end{cases}
$$
$Q$ is the $(n-\sigma) \times (n-\sigma)$-submatrix of $P$ which holds the transition probabilities from transient to transient vertices.
We define $A$ as the adjacency matrix of $G[v_1, \dots, v_{n-\sigma}]$, the subgraph induced by all transient vertices, and
$D$ as the diagonal matrix with $D_{ii}$ being the degree of $v_i$ in $G$.
Then $Q = D^{-1}A$ and we can rewrite \eqref{linsys} as
$$
    (I-D^{-1}A)x = b
$$
If we multiply with $D$ from the left we get the equivalent system
$$
    (D-A)x = Db.
$$
$D$ and $A$ both are symmetric and $\sum_{j=1}^{n-\sigma}A_{ij}$ equals 
the degree of vertex $v_i$ in $G[v_1, \dots, v_{n-\sigma}]$ whereas $D_{ii}$ holds the degree of $v_i$ in $G$,
so $(D-A)$ is an SDD matrix.

\paragraph{}
$Db$ and $(D-A)$ can be constructed in liner time in the number of non-zero entries.
As a result, the running time of the algorithm is dominated by the near-linear running time of the SDD-solver described in Theorem~\ref{SDD_systems}.
\end{proof}

\subsection{Multiple Overlapping Community Detection}
\label{sec:MultipleCommunityDetection}

Our model for finding a single community can be naturally extended to multiple overlapping communities.
The input is an undirected, connected graph $G=(V,E)$ and a nonempty set of seed nodes $S \subset V$, each with an affinity $\beta$.
We want to detect communities $C_1, \dots, C_l$, so $\beta$ is an $l$-tuple $\beta = (\beta_1, \dots, \beta_l)$
where $0 \le \beta_i(s) \le 1$ describes the affinity of seed node $s$ to community $C_i$.


\begin{definition}
    The graph $G'$, the random walk, and $E_{v \to s}$ are defined as before.
    We define for all non-seed nodes $v \in V \setminus S$:
    $$
        \beta(v) := (\beta_1(v), \dots, \beta_l(v))
    $$
    where
    $$ 
        \beta_i(v) := \sum_{s \in S} \beta_i(s) \Pr(E_{v \to s})
    $$
\end{definition}

\begin{theorem} 
    Given a graph $G = (V,E)$ with $l$ communities, seed nodes $S$ along with their affinity $\beta=(\beta_1, \dots, \beta_l)$,
    there exists an algorithm which computes $\beta(v)$ for all $v \in V \setminus S$
    with running time 
    $\tilde{O}(l \cdot m \cdot \log n) )$, where $m := |E|$ and $n := |V|$.
    The $\tilde{O}$-notation hides a factor of at most $(\log \log n)^2$.
\end{theorem}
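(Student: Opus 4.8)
The plan is to reduce the multi-community problem to $l$ independent instances of the single-community algorithm and to invoke the previous theorem $l$ times. First I would observe that, by the definition above, computing $\beta(v) = (\beta_1(v), \dots, \beta_l(v))$ amounts to computing each coordinate $\beta_i(v) = \sum_{s \in S} \beta_i(s) \Pr(E_{v \to s})$ separately, and that each coordinate is exactly the single-community affinity for community $C_i$ using the seed affinities $\beta_i$. Hence for each fixed $i$ the derivation from the single-community proof applies verbatim: by Corollary~\ref{corolabsorb} we have $\Pr(E_{v_k \to s_j}) = B_{kj}$ with $B = (I-Q)^{-1}R$, and pulling the linear map out of the sum shows that the vector of values $\beta_i(v_1), \dots, \beta_i(v_{n-\sigma})$ is the solution $x_i$ of the system $(I-Q)x_i = b_i$, where $b_i := \sum_{j=1}^{\sigma} \beta_i(s_j) R_{*j}$.

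Second, and this is the key structural point, I would emphasize that the coefficient matrix $(I-Q)$ — and, after the symmetrization step of the previous proof, the SDD matrix $(D-A)$ — depends only on the graph $G$ and the seed set $S$, not on the affinities. Only the right-hand side $b_i$ (respectively $Db_i$) varies with $i$. Therefore the multi-community task is precisely the problem of solving $l$ SDD linear systems that share a common coefficient matrix $(D-A)$ but have $l$ different right-hand sides $Db_1, \dots, Db_l$, each of which can be handled exactly as in the single-community case.

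Third, I would bound the running time. Each $b_i$ is a weighted sum of at most $\sigma$ columns of $R$, and $Db_i$ is obtained by a diagonal scaling, so all $l$ right-hand sides can be assembled in time $O(l \cdot m)$, linear in the number of non-zero entries. Then I apply the SDD solver of Theorem~\ref{SDD_systems} once per community; each solve runs in $\tilde{O}(m \cdot \log n)$ time and returns an approximation $\tilde{x}_i$ to $x_i$. Summing over the $l$ communities yields a total running time of $l \cdot \tilde{O}(m \cdot \log n) = \tilde{O}(l \cdot m \cdot \log n)$, with the $(\log\log n)^2$ factor absorbed into the $\tilde{O}$-notation as claimed.

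I do not expect a genuine obstacle here, since the heavy lifting was already carried out in the single-community theorem; the only thing requiring care is making explicit that the coefficient matrix is shared across all communities, which is what licenses the clean $l$-fold repetition, and checking that assembling the $l$ right-hand sides does not dominate the solver cost. One could remark that, because the matrix is reused, a more refined analysis might amortize preprocessing across the $l$ solves, but since the theorem claims only the factor-$l$ bound, the straightforward repetition suffices.
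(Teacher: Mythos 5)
Your proposal is correct and takes essentially the same approach as the paper, which simply invokes the single-community algorithm once per community and multiplies the running time by $l$. Your added observations---that the SDD coefficient matrix $(D-A)$ is shared across all $l$ systems and that assembling the right-hand sides costs only $O(l \cdot m)$---are sound elaborations of the same argument, not a different route.
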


\begin{proof}
    The values $\beta_i(v_1), \dots, \beta_i(v_{n-\sigma})$, $1 \le i \le l$ can be calculated by the algorithm for single community detection.
    So the running time for finding $l$ communities is $l$ times the time needed to find a single one.
\end{proof}

\paragraph{}
An interesting property is, that if the affinity of all seed nodes sum up to a certain value the affinities of non-seed nodes do as well.
This way, one can use a model where each vertex has a summed affinity of 1, which is individually distributed among different communities.

\begin{corollary} 
    If $\sum_{i=1}^{l} \beta_i(s) = c$ for all $s \in S$, then $\sum_{i=1}^{l} \beta_i(v) = c$ for all $v \in V$.
\end{corollary}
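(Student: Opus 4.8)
The plan is to split on whether a vertex is a seed node or a non-seed node, since the two cases are handled by completely different mechanisms. For a seed node $s \in S$, the conclusion $\sum_{i=1}^{l} \beta_i(s) = c$ is nothing but the hypothesis, so there is no work to do there. The entire content of the corollary lies in propagating this normalization from the seeds to the non-seed nodes, and here the key ingredient is already available: by Theorem~\ref{mainabsorb}, every infinite random walk in $G'$ is absorbed with probability one, which gives the normalization identity $\sum_{s \in S} \Pr(E_{v \to s}) = 1$ for every non-seed node $v$.

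First I would fix a non-seed node $v \in V \setminus S$ and expand the definition of $\beta_i(v)$, writing
$$
\sum_{i=1}^{l} \beta_i(v) = \sum_{i=1}^{l} \sum_{s \in S} \beta_i(s)\, \Pr(E_{v \to s}).
$$
Next I would interchange the two finite sums (both index sets are finite, so this is unconditionally valid) and pull the absorption probability out of the inner sum, obtaining
$$
\sum_{i=1}^{l} \beta_i(v) = \sum_{s \in S} \Pr(E_{v \to s}) \sum_{i=1}^{l} \beta_i(s).
$$
At this point the hypothesis $\sum_{i=1}^{l} \beta_i(s) = c$ makes the inner sum constant across all seeds, so it factors out as the constant $c$, leaving $c \sum_{s \in S} \Pr(E_{v \to s})$.

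The final step is to apply the normalization identity $\sum_{s \in S} \Pr(E_{v \to s}) = 1$, which collapses the expression to $c \cdot 1 = c$, completing the non-seed case and, together with the trivial seed case, the proof for all $v \in V$. There is no genuine obstacle here; the only thing one must be careful about is remembering to invoke Theorem~\ref{mainabsorb} to justify that the absorption probabilities form a probability distribution over the seeds, rather than treating that as self-evident. In effect, the corollary is just the statement that each $\beta(v)$ is a convex combination of the seed affinities, which preserves any affine constraint—here the constraint $\sum_i \beta_i = c$—that all the seeds satisfy.
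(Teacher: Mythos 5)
Your proof is correct and follows essentially the same route as the paper: expand the definition of $\beta_i(v)$, interchange the two finite sums, factor out the constant $c$, and conclude via the normalization $\sum_{s \in S} \Pr(E_{v \to s}) = 1$ from Theorem~\ref{mainabsorb}. Your explicit handling of the seed-node case (where the claim is just the hypothesis) is a small point of extra rigor that the paper leaves implicit, but it does not change the argument.
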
 

\begin{proof}
    \[
    \begin{split}
          \sum_{i=1}^{l} \beta_i(v)
        & = \sum_{i=1}^{l} \sum_{s \in S} \beta_i(s) \Pr(E_{v \to s}) \\
        & = \sum_{s \in S} \sum_{i=1}^{l} \beta_i(s) \Pr(E_{v \to s}) \\
        & = \sum_{s \in S} c \Pr(E_{v \to s}) = c
    \end{split}
    \]
\end{proof}

\newpage
\section{Experimental Results}
\label{sec:experimental}

In this section we use the LFR benchmark proposed by Lancichinetti et al.~\cite{lan2009} to evaluate the quality of our model.
Even though our model is capable of overlapping community detection, to keep things simple, we focus on non-overlapping community detection only.

\subsection{Implementation}

Our reference implementation is written in the C++ programming language.
It uses the Boost Graph Library\footnote{\url{http://www.boost.org/doc/libs/1_55_0/libs/graph}} for graph manipulation and traversal,
and the Eigen library\footnote{\url{http://eigen.tuxfamily.org/}} to solve numerical problems.
Since it is a very recent finding that SDD systems can be solved in near-linear time, there is no stable implementation of this fast algorithm yet.
Instead, we use a direct sparse Cholesky decomposition\footnote{\url{http://eigen.tuxfamily.org/dox-devel/group__SparseCholesky__Module.html}}.
This algorithm does not have a near-linear time complexity, but has been optimized for real world applications and performs quite well in practice.
In our example the detection of 200 communities in a graph consisting of 10000 vertices and 150000 edges on a consumer laptop (Intel Core i3, 4GB RAM) takes approximately 30 seconds.

\subsection{Benchmark}

The LFR benchmark is used widely for evaluation of community detection algorithms~\cite{xie2011}~\cite{lan2009}.
Node degrees and community sizes of many real world networks follow a \textit{power law distribution}~\cite{clauset2009powerlaw} 
and the LFR benchmark aims to simulate these kinds of networks.
A value $x$ is said to obey a power law distribution if it occurs with a probability $p(x) \propto x^{-\alpha}$ for a constant parameter $\alpha$.
Usually, $\alpha$ ranges between 2 and 3. This means low values for $x$ are common and high values are rare.

\newpage

The LFR benchmark generates a random graph and assigns each vertex to exactly one community. This process can be controlled by the following parameters:
\begin{itemize}
\item $N$ controls the number of nodes in a graph.
\item $\langle k \rangle$ controls the average node degree.
\item $\gamma$ and $\beta$ control the exponent of the power law distribution of node degree and the community size, respectively.
\item $\mu$ is called the \textit{mixing parameter} and controls how interweaved the communities are. 
      Each node shares a fraction $1-\mu$ of its edges with other nodes of the same community and a fraction of $\mu$ with nodes that belong to a different community.
\end{itemize}
An illustration of a graph generated by the LFR benchmark is shown in figure~\ref{fig:lfrGraph}.

\paragraph{Setup}
As the first step in our test setup, the LFR benchmark generates a graph $G=(V,E)$ consisting of communities $C_1, \dots, C_l$ and assigns each vertex to one community $C_i$.
Then a random subset $S \subset V$ is chosen as a set of seed nodes.
For each seed node $s \in S$ $\beta(s) = (\beta_1(s), \dots, \beta_l(s))$ is chosen with:
$$
\beta_i(s) = 
\begin{cases} 
    1 &\mbox{if vertex $s$ belongs to community $C_i$} \\ 
    0 &\mbox{otherwise}
\end{cases}
$$
The fraction of seed nodes is controlled by parameter $\sigma$ so that $|S| = \sigma |V|$.
The number of communities in the LFR-graph appears to be linear in the size of graph, so for a fixed $\sigma$ the number of seed nodes remains constant for different graph sizes.

Next, community detection is performed.
The multiple community detection algorithm from section~\ref{sec:MultipleCommunityDetection} returns for each non-seed node $v \in V \setminus S$
and affinity vector $\beta(v) = (\beta_1(v), \dots, \beta_l(v))$. We assign vertex $v$ to the community with the greatest affinity value, i.e.,
$v$ is assigned to $C_i$ where $i = \argmax_{1 \le i \le l} \beta_i(v)$.
We call the original community of vertex $v$ chosen by the LFR benchmark $c(v)$ and the community returned by the algorithm $\hat{c}(v)$.
As an intuitive quality measure $Q$, we use the fraction of correctly assigned vertices.
$Q$ ranges between 0 (bad) and 1 (good).
$$
Q = \frac{|\{v \in V | c(v) = \hat{c}(v) \}|}{|V|}
$$
We calculate this quality measure for our community detection algorithm on various graphs.
Different values for the parameters $\gamma$, $\beta$, $\langle k \rangle$, $\sigma$ and $\mu$ are chosen to simulate a wide range of scenarios.
For each setting 100 runs were performed and the average result for $Q$ was taken. 
The result is shown in figure~\ref{fig:resultBenchmark}.
All benchmark-parameters are again summarized in figure~\ref{fig:summaryBenchmark}.


\newcommand{\varianceplot}[6]{
    \begin{tikzpicture}[scale = 0.75]
        \begin{axis}[
            width=1.3\textwidth,
            title={$N = 500$, $\gamma = #1$, $\beta = #2$ and $\sigma$ = #3, $\langle k \rangle$ = #4, $\mu$ = #5},
            xlabel={quality $Q$},
            ylabel={relative frequency},
            ymin=0, ymax=0.2,
            ytick={0,0.1,0.2},
            legend pos=south west,
            ymajorgrids=true,
            grid style=dashed,
        ]

        \addplot+[ybar] plot coordinates{#6};
     
        \end{axis}
    \end{tikzpicture}
}

\newcommand{\awesomeplot}[6]{
    \begin{tikzpicture}[scale = 0.75]
        \begin{axis}[
            width=1.3\textwidth,
            title={$N = 500$, $\gamma = #1$, $\beta = #2$ and $\sigma$ = #3  },
            xlabel={mixing parameter $\mu$},
            ylabel={quality $Q$},
            xmin=0, xmax=1,
            ymin=0, ymax=1,
            legend pos=south west,
            ymajorgrids=true,
            grid style=dashed,
        ]

        \addplot coordinates{#4};
        \addplot coordinates{#5};
        \addplot coordinates{#6};

        \legend{
            $\langle k \rangle = 20$, 
            $\langle k \rangle = 30$, 
            $\langle k \rangle = 40$, 
        }
     
    \end{axis}
    \end{tikzpicture}
}


\begin{figure}
    \centering
    \includegraphics[scale=0.20, angle=180]{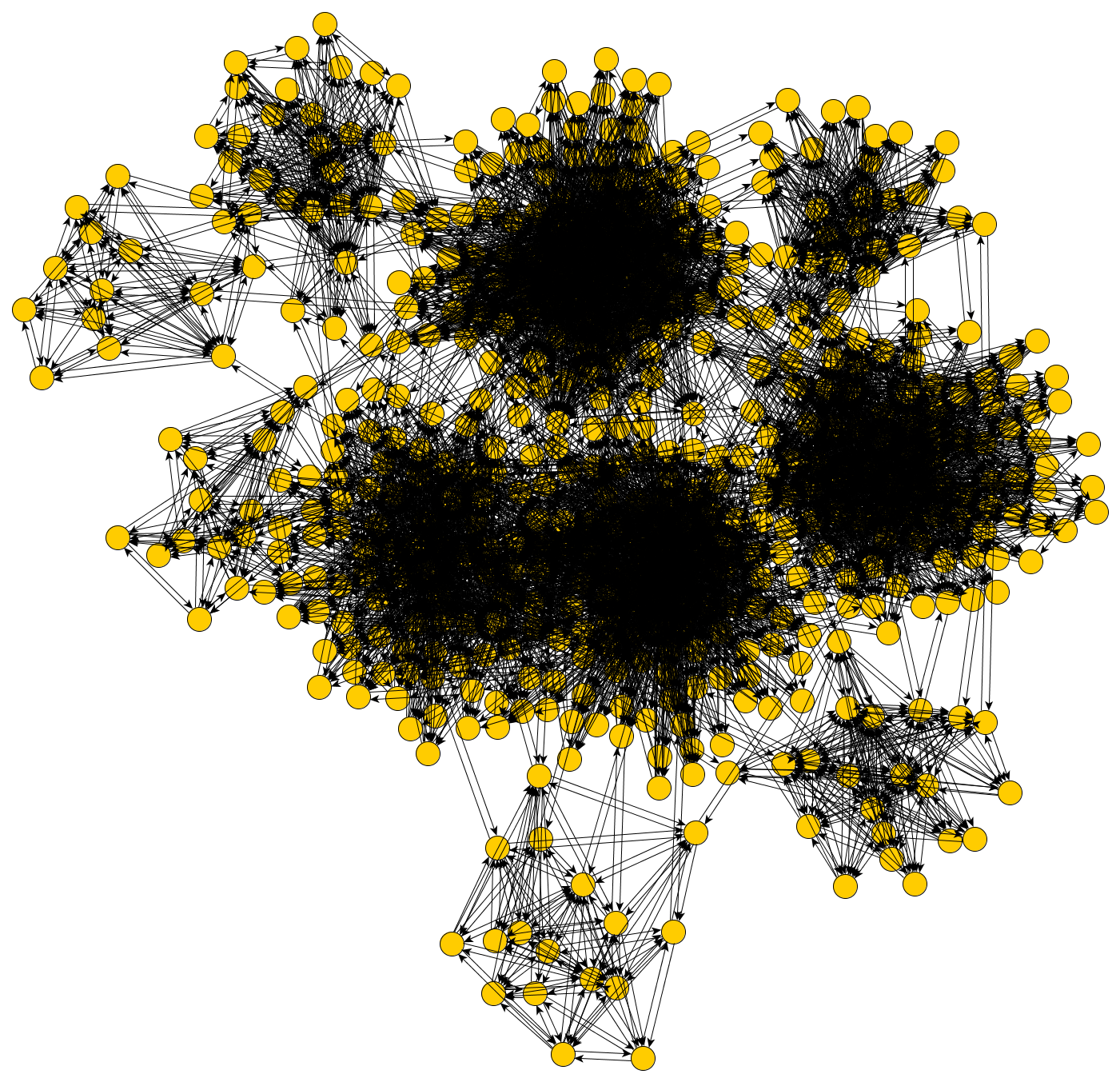}
    \caption[An example graph generated by the LFR benchmark.]
        {A graph generated by the LFR benchmark with $N = 500$, exponents $\gamma = 2$ and $\beta = 2$, average node degree $\langle k \rangle = 20$, and mixing parameter $\mu = 0.05$.
         It consists of 10 communities.} 
    \label{fig:lfrGraph}
\end{figure}

\begin{figure}
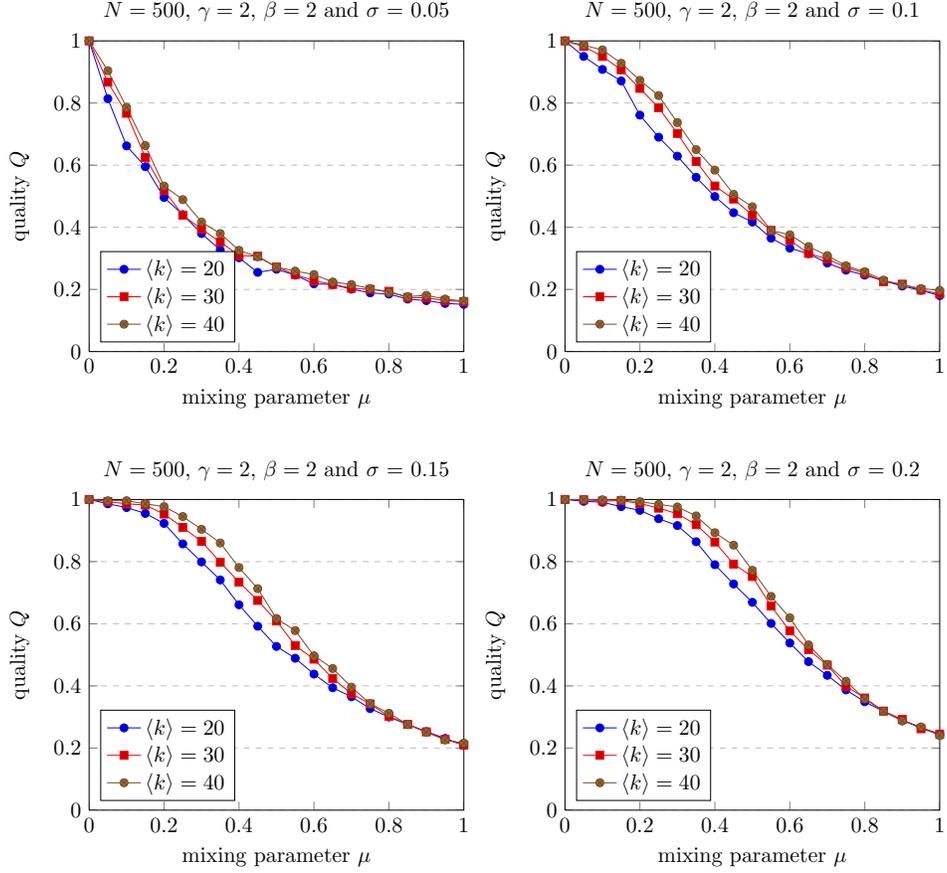

    \centering
    \begin{subfigure}{0.5\textwidth}
        \awesomeplot{2}{2}{0.05}
        {
            (0.00, 1.000)
            (0.05, 0.814)
            (0.10, 0.662)
            (0.15, 0.595)
            (0.20, 0.496)
            (0.25, 0.441)
            (0.30, 0.380)
            (0.35, 0.327)
            (0.40, 0.302)
            (0.45, 0.255)
            (0.50, 0.265)
            (0.55, 0.246)
            (0.60, 0.218)
            (0.65, 0.216)
            (0.70, 0.201)
            (0.75, 0.189)
            (0.80, 0.185)
            (0.85, 0.169)
            (0.90, 0.164)
            (0.95, 0.155)
            (1.00, 0.152)
        }
        {
            (0.00, 1.000)
            (0.05, 0.867)
            (0.10, 0.767)
            (0.15, 0.625)
            (0.20, 0.519)
            (0.25, 0.439)
            (0.30, 0.394)
            (0.35, 0.353)
            (0.40, 0.309)
            (0.45, 0.307)
            (0.50, 0.272)
            (0.55, 0.248)
            (0.60, 0.228)
            (0.65, 0.216)
            (0.70, 0.204)
            (0.75, 0.200)
            (0.80, 0.194)
            (0.85, 0.174)
            (0.90, 0.173)
            (0.95, 0.164)
            (1.00, 0.161)
        }
        {
            (0.00, 1.000)
            (0.05, 0.904)
            (0.10, 0.786)
            (0.15, 0.663)
            (0.20, 0.533)
            (0.25, 0.489)
            (0.30, 0.417)
            (0.35, 0.380)
            (0.40, 0.326)
            (0.45, 0.305)
            (0.50, 0.272)
            (0.55, 0.259)
            (0.60, 0.248)
            (0.65, 0.224)
            (0.70, 0.216)
            (0.75, 0.203)
            (0.80, 0.190)
            (0.85, 0.177)
            (0.90, 0.181)
            (0.95, 0.169)
            (1.00, 0.163)
        }
    \end{subfigure}%
    \begin{subfigure}{0.5\textwidth}
        \awesomeplot{2}{2}{0.1}
        {
            (0.00, 1.000)
            (0.05, 0.950)
            (0.10, 0.908)
            (0.15, 0.871)
            (0.20, 0.761)
            (0.25, 0.690)
            (0.30, 0.629)
            (0.35, 0.561)
            (0.40, 0.499)
            (0.45, 0.447)
            (0.50, 0.417)
            (0.55, 0.365)
            (0.60, 0.333)
            (0.65, 0.315)
            (0.70, 0.285)
            (0.75, 0.262)
            (0.80, 0.246)
            (0.85, 0.228)
            (0.90, 0.211)
            (0.95, 0.197)
            (1.00, 0.180)
        }
        {
            (0.00, 1.000)
            (0.05, 0.982)
            (0.10, 0.950)
            (0.15, 0.907)
            (0.20, 0.847)
            (0.25, 0.785)
            (0.30, 0.702)
            (0.35, 0.612)
            (0.40, 0.533)
            (0.45, 0.491)
            (0.50, 0.439)
            (0.55, 0.390)
            (0.60, 0.358)
            (0.65, 0.316)
            (0.70, 0.298)
            (0.75, 0.271)
            (0.80, 0.252)
            (0.85, 0.225)
            (0.90, 0.217)
            (0.95, 0.198)
            (1.00, 0.186)
        }
        {
            (0.00, 1.000)
            (0.05, 0.986)
            (0.10, 0.971)
            (0.15, 0.928)
            (0.20, 0.873)
            (0.25, 0.824)
            (0.30, 0.737)
            (0.35, 0.650)
            (0.40, 0.584)
            (0.45, 0.506)
            (0.50, 0.466)
            (0.55, 0.390)
            (0.60, 0.376)
            (0.65, 0.338)
            (0.70, 0.309)
            (0.75, 0.276)
            (0.80, 0.257)
            (0.85, 0.230)
            (0.90, 0.215)
            (0.95, 0.203)
            (1.00, 0.197)
        }
    \end{subfigure}
    \begin{subfigure}{0.5\textwidth}
        \awesomeplot{2}{2}{0.15}
        {
            (0.00, 1.000)
            (0.05, 0.986)
            (0.10, 0.974)
            (0.15, 0.955)
            (0.20, 0.923)
            (0.25, 0.857)
            (0.30, 0.799)
            (0.35, 0.741)
            (0.40, 0.661)
            (0.45, 0.592)
            (0.50, 0.527)
            (0.55, 0.489)
            (0.60, 0.438)
            (0.65, 0.394)
            (0.70, 0.365)
            (0.75, 0.327)
            (0.80, 0.300)
            (0.85, 0.276)
            (0.90, 0.253)
            (0.95, 0.231)
            (1.00, 0.209)
        }
        {
            (0.00, 1.000)
            (0.05, 0.993)
            (0.10, 0.986)
            (0.15, 0.982)
            (0.20, 0.953)
            (0.25, 0.910)
            (0.30, 0.865)
            (0.35, 0.798)
            (0.40, 0.734)
            (0.45, 0.675)
            (0.50, 0.609)
            (0.55, 0.530)
            (0.60, 0.486)
            (0.65, 0.424)
            (0.70, 0.375)
            (0.75, 0.342)
            (0.80, 0.302)
            (0.85, 0.276)
            (0.90, 0.252)
            (0.95, 0.228)
            (1.00, 0.210)
        }
        {
            (0.00, 1.000)
            (0.05, 0.996)
            (0.10, 0.996)
            (0.15, 0.986)
            (0.20, 0.977)
            (0.25, 0.945)
            (0.30, 0.904)
            (0.35, 0.860)
            (0.40, 0.781)
            (0.45, 0.713)
            (0.50, 0.617)
            (0.55, 0.578)
            (0.60, 0.497)
            (0.65, 0.456)
            (0.70, 0.396)
            (0.75, 0.343)
            (0.80, 0.312)
            (0.85, 0.276)
            (0.90, 0.251)
            (0.95, 0.226)
            (1.00, 0.216)
        }
    \end{subfigure}%
    \begin{subfigure}{0.5\textwidth}
        \awesomeplot{2}{2}{0.2}
        {
            (0.00, 1.000)
            (0.05, 0.994)
            (0.10, 0.991)
            (0.15, 0.977)
            (0.20, 0.965)
            (0.25, 0.938)
            (0.30, 0.916)
            (0.35, 0.864)
            (0.40, 0.790)
            (0.45, 0.728)
            (0.50, 0.669)
            (0.55, 0.601)
            (0.60, 0.538)
            (0.65, 0.478)
            (0.70, 0.434)
            (0.75, 0.387)
            (0.80, 0.349)
            (0.85, 0.318)
            (0.90, 0.290)
            (0.95, 0.266)
            (1.00, 0.245)
        }
        {
            (0.00, 1.000)
            (0.05, 0.999)
            (0.10, 0.995)
            (0.15, 0.996)
            (0.20, 0.987)
            (0.25, 0.972)
            (0.30, 0.954)
            (0.35, 0.919)
            (0.40, 0.863)
            (0.45, 0.792)
            (0.50, 0.752)
            (0.55, 0.657)
            (0.60, 0.577)
            (0.65, 0.517)
            (0.70, 0.467)
            (0.75, 0.399)
            (0.80, 0.361)
            (0.85, 0.319)
            (0.90, 0.292)
            (0.95, 0.262)
            (1.00, 0.245)
        }
        {
            (0.00, 1.000)
            (0.05, 1.000)
            (0.10, 0.999)
            (0.15, 0.998)
            (0.20, 0.993)
            (0.25, 0.984)
            (0.30, 0.976)
            (0.35, 0.947)
            (0.40, 0.893)
            (0.45, 0.853)
            (0.50, 0.772)
            (0.55, 0.688)
            (0.60, 0.619)
            (0.65, 0.532)
            (0.70, 0.469)
            (0.75, 0.415)
            (0.80, 0.358)
            (0.85, 0.318)
            (0.90, 0.287)
            (0.95, 0.268)
            (1.00, 0.240)
        }
    \end{subfigure}
    \caption[Results of the benchmark]
    {Results of the benchmark. The plotted value for $Q$ is the average over 100 runs with different random seed vertices and graphs generated for the set of parameters.}
    \label{fig:resultBenchmark}
\end{figure}

\begin{figure}
    \centering
    \begin{tabular}{| c | l |}
        \hline
        $\gamma$            & Exponent of power law distribution of node degree \\ \hline
        $\beta$             & Exponent of power law distribution of community size \\ \hline
        $\langle k \rangle$ & average node degree \\ \hline
        $N$                 & Number of nodes \\ \hline
        $\sigma$            & Fraction of nodes that are seed nodes \\ \hline
        $\mu$               & Mixing parameter \\ \hline
        $Q$                 & quality measure \\ \hline
    \end{tabular}
    \caption{Summary of all parameters for the benchmark}
    \label{fig:summaryBenchmark}
\end{figure}

\begin{figure}
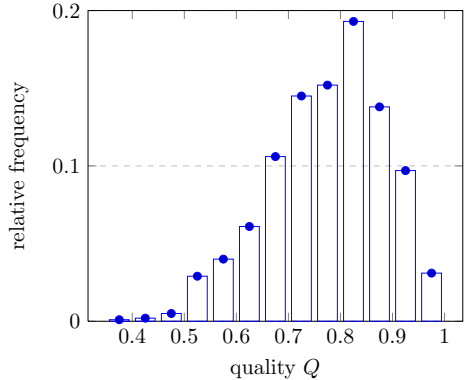

    \centering
    \begin{subfigure}{0.5\textwidth}
        \varianceplot{2}{2}{0.1}{20}{0.2}
        {
            ( 0.375,  0.001  )
            ( 0.425,  0.002  )
            ( 0.475,  0.005  )
            ( 0.525,  0.029  )
            ( 0.575,  0.04   )
            ( 0.625,  0.061  )
            ( 0.675,  0.106  )
            ( 0.725,  0.145  )
            ( 0.775,  0.152  )
            ( 0.825,  0.193  )
            ( 0.875,  0.138  )	
            ( 0.925,  0.097  )
            ( 0.975,  0.031  )
        }

    \end{subfigure}
    \caption[Histogram over the distribution of $Q$]
    {Histogram over the distribution of $Q$. 1000 data points created by 1000 runs on the same graph. The seed nodes were chosen randomly for each iteration.}
    \label{fig:histogram}

    \label{fig:bla}
\end{figure}


\subsection{Evaluation}

The benchmark setup we used to evaluate our model heavily relies on the methods proposed and used by Lancichinetti et al.
In their paper \textit{Community detection algorithms: a comparative analysis}~\cite{lan2009} 
they employ the LFR benchmark to evaluate modern methods for overlapping and non-overlapping community detection.
They use the less intuitive \textit{Normalized Mutual Information} as quality measure and slightly different parameters for graph generation,
so one has to be careful when comparing our results with theirs, however, their paper may help interpreting our results.
Further studies should strive to recreate their test setup to compare our method to state-of-the-art methods.

\paragraph{}
The results of the benchmark in figure~\ref{fig:summaryBenchmark} show that the performance of our model greatly depends on the choice of seed nodes.
The more seed nodes are available, the better is the quality of the output.
Given a network with mixing parameter $\mu = 0.3$ the algorithm fails when given only 5\% seed nodes ($Q \approx 0.4$) 
but works fine when given 20\% seed nodes ($Q \approx 0.95$).

There is also a correlation between the mixing parameter of the graph and the fraction of seed nodes needed to archive an accurate detection.
The more diffuse the communities are, the more seed nodes are needed for good detection.
To archive a quality of $Q = 0.95$ in a graph with mixing parameter $\mu = 0.1$ a fraction of 10\% seed nodes are needed.
If $\mu = 0.3$ it takes 20\% seed nodes to archive the same quality.
However, if the mixing parameter of the graph becomes too big ($\mu > 0.5$) the algorithm fails even for large fractions of seed nodes.
Also high average vertex degrees slightly improve the quality of the detection.

Furthermore, figure~\ref{fig:histogram} shows that for the same graph and different sets of seed nodes of the same size the quality of the detection varies a lot.
So the choice of good seed nodes is critical for the performance of the algorithm.
It could be an interesting subject for further studies to identify criteria for the choice of good seed nodes.

\newpage
\section{Conclusion}

The goal of this work was the development and evaluation of an algorithm for overlapping, fuzzy community detection based on random walks.
The algorithm extracts specific communities from a network based on a set of seed nodes. 
It runs in time near-linear in the number of communities times the number of edges in the network.
We used the LFR benchmark for evaluation and found that, given a good set of seed nodes, the algorithm is able to correctly reconstruct the communities of a network.

\paragraph{}
The number of communities contributes linearly to the algorithm's running time.
This means that in a scenario where the number of communities is proportional to the number of nodes, the algorithm's running time would be squared in the number of nodes,
thus may not be efficient.
It may be better suited for scenarios with only a low number of communities, such as the
detection of political leanings of users of a social network.

\paragraph{}
Further work should focus on a rigorous analysis of the proposed method based on the LFR benchmark. 
One should aim to evaluate the performance under different settings, as well
as to compare the algorithm to other methods for community detection.
It could also be of interest to test the usefulness of our model in real world situations or
to analyze different strategies for choosing good seed nodes.

\newpage
\bibliography{thesisCommunityDetection}{}
\bibliographystyle{plain}

\end{document}